\newtheorem{proposition}{Proposition}[section]
\newcommand{\outprod}{\ensuremath{\boxtimes}}
\newcommand{\merge}{\ensuremath{\boxplus}}
\newcommand{\rndwalk}{\textsc{RandomWalk}}
\newcommand{\blockmerge}{\textsc{BlockMerge}}
\newcommand{\walknmerge}{\textsc{Walk'n'\-Merge}}
\newcommand{\BCPALS}{\textsc{BCP\_ALS}}
\newcommand{\ParCube}{\textsc{ParCube}}
\newcommand{\ParCuber}{\ParCube$_{0/1}$}
\newcommand{\mdl}{\textsc{mdl}}
\newcommand{\CPAPR}{\textsc{cp\_apr}}
\newcommand{\CPAPRr}{\textsc{cp\_apr$_{0/1}$}}
\newcommand{\Enron}{\textsf{Enron}}
\newcommand{\TracePort}{\textsf{TracePort}}
\newcommand{\FB}{\textsf{Facebook}}
\newcommand{\YPSS}{\textsf{YPSS}}
\newlength{\figwidth}
\newlength{\figsep}
\newlength{\triplefigwidth}
\renewcommand{\paragraph}[1]{\textbf{#1.}}
\begin{document}

\title{Scalable Boolean Tensor Factorizations using Random Walks}
\author{D\'ora Erd\H{o}s\thanks{Boston University, Boston, MA, USA}\hspace{2em} Pauli Miettinen\thanks{Max-Planck-Institut f\"{u}r Informatik, Saarbr\"{u}cken, Germany}}


\maketitle

\begin{abstract}
Tensors are becoming increasingly common in data mining, and consequently, tensor factorizations are becoming more and more important tools for data miners. When the data is binary, it is natural to ask if we can factorize it into binary factors while simultaneously making sure that the reconstructed tensor is still binary. Such factorizations, called Boolean tensor factorizations, can provide improved interpretability and find Boolean structure that is hard to express using normal factorizations. Unfortunately the algorithms for computing Boolean tensor factorizations do not usually scale well. In this paper we present a novel algorithm for finding  Boolean CP and Tucker decompositions of large and sparse binary tensors. In our experimental evaluation we show that our algorithm can handle large tensors and accurately reconstructs the latent Boolean structure. 
\end{abstract}

\section{Introduction}
\label{sec:introduction}

Tensors, and their factorizations, are getting increasingly popular in
data mining. Many real-world data sets can be interpreted as ternary (or
higher arity) relations (e.g.\ sender, receiver, and date in
correspondence or object, relation, and subject in RDF data bases or natural language processing).
Such relations have a natural representations as 3-way (or higher order)
tensors. A data miner who is interested in finding some structure from such a tensor would normally use tensor decomposition methods, commonly either CANDECOMP/PARAFAC (CP) or Tucker decomposition (or variants thereof). In both of these methods, the goal is to (approximately) reconstruct the input tensor as a sum of simpler elements (e.g.\ rank-$1$ tensors) with the hope that these simpler elements would reveal the latent structure of the data.

The type of these simpler elements plays a crucial role on determining what kind of structure the decomposition will reveal. For example, if the elements contain arbitrary real numbers, we are finding general linear relations; if the numbers are non-negative, we are finding parts-of-whole representations. 
In this paper, we study yet another type of structure: that of \emph{Boolean tensor factorizations} (BTF). In BTF, we require the data tensor to be binary, and we also require any matrices and tensors that are part of the decomposition to be binary. Further, instead of normal addition, we use Boolean \emph{or}, that is, we define $1+1=1$. The type of structure found under BTF is different to the type of structure found under normal algebra (non-negative or otherwise). Intuitively, if there are multiple ``reasons'' for a $1$ in the data, under normal algebra and non-negative values, for example, we explain this $1$ using a sum of smaller values, but under Boolean algebra, any of these reasons alone is sufficient, and there is no penalty for having multiple reasons. For a concrete example, consider a data that contains noun phrase--verbal phrase--noun phrase patterns extracted from textual data. Underlying this data are the true facts: which entities are connected to which entities by which relations. We see the noun phrase--verbal phrase--noun phrase ($n_1$, $v$, $n_2$) triple if 1) there is a fact $(e_1, r, e_2)$, that is, entity $e_1$ is connected to entity $e_2$ via relation $r$; and 2) $n_1$ is one of the phrases for $e_1$, $n_2$ is a phrase for $e_2$, and $v$ is a phrase for $r$. It does not matter if there is a different ``core triple'' $(e_1', r', e_2')$ that could also generate the same observed triple as long as there is at least one of them. This kind of model is exactly the Boolean Tucker decomposition (see Section~\ref{def:BTucker}), and we will show in the experiments (Section~\ref{sec:experimental_eval}) how it performs in this type of data.

We want to emphasize that we do not consider BTF as a replacement of other tensor factorization methods even if the data is binary. Rather, we consider it as an addition to the data miner's toolbox, letting her to explore another type of structure.


But how do we find a Boolean factorization of a given tensor? There exists algorithms for BTF (e.g.~\cite{leenen99indclas,miettinen11boolean,belohlavek12optimal}), but they do not scale well. Our main contribution in this paper is to present a scalable algorithm for finding Boolean CP and Tucker decompositions. Further, we apply the minimum description length principle to automatically select the size of the decomposition. 

Our algorithm can be divided into many phases. The main work is done by the \walknmerge\ algorithm (Section~\ref{sect:walknmerge}), but to obtain proper Boolean CP or Tucker decomposition, we need to apply some post-processing to the output of \walknmerge\ (explained in Section~\ref{sect:post-process}). We present our experiments in Section~\ref{sec:experimental_eval} and discuss related work in Section~\ref{sec:related-work}. Before all of this, however, we present some important definitions.

\section{Definitions}
\label{sec:background}

Before we can present our algorithm, we will explain our notation and formally define the tensor factorization problems we are working with. At the end of this section, we introduce two important concepts, blocks and convex hulls, that will be used extensively in the algorithm

\subsection{Notation}
\label{sec:notation}

Throughout this paper vectors are indicated as bold-face lower-case letters ($\vec{v}$), matrices as bold-face upper-case letters ($\matr{M}$), and tensors as bold-face upper-case calligraphic letters ($\tens{T}$). We present the notation for 3-way tensors, but it can be extended to $N$-way tensors in a straight forward way. Element $(i,j,k)$ of a 3-way tensor $\tens{X}$ is denoted either as $x_{ijk}$ or as $(\tens{X})_{ijk}$. A colon in a subscript denotes taking that mode entirely; for example, if $\matr{X}$ is a matrix, $\vec{x}_{i:}$ denotes the $i$th row of $\matr{X}$ (for a shorthand, we use $\vec{x}_j$ to denote the $j$th column of $\matr{X}$). For a 3-way tensor $\tens{X}$, $\vec{x}_{:jk}$ is the $(j,k)$ \emph{mode-1 (column) fiber}, $\vec{x}_{i:k}$ the $(i,k)$ \emph{mode-2 (row) fiber}, and $\vec{x}_{ij:}$ the $(i,j)$ \emph{mode-3 (tube) fiber}. Furthermore, $\matr{X}_{::k}$ is the $k$th \emph{frontal slice} of $\tens{X}$. We use $\matr{X}_k$ as a shorthand for the $k$th frontal slice.

For a tensor $\tens{X}$, the number of  non-zero elements in it is denoted by $\abs{\tens{X}}$. The Frobenius norm of a 3-way tensor $\tens{X}$, $\norm{\tens{X}}$, is defined as $\sqrt{\sum_{i,j,k}x_{ijk}^2}$. If $\tens{X}$ is binary, i.e.\ takes values only from $\B$, $\abs{\tens{X}} = \norm{\tens{X}}^2$.

The \emph{tensor sum} of two \byby{n}{m}{l} tensors $\tens{X}$ and $\tens{Y}$ is the element-wise sum, $(\tens{X}+\tens{Y})_{ijk} = x_{ijk}+y_{ijk}$. The \emph{Boolean tensor sum} of binary tensors $\tens{X}$ and $\tens{Y}$ is defined as $(\tens{X}\lor\tens{Y})_{ijk} = x_{ijk}\lor y_{ijk}$.

The outer product of vectors in $N$ modes is denoted by $\outprod$. That
is, if $\vec{a}$, $\vec{b}$, and $\vec{c}$ are vectors of length $n$,
$m$, and $l$, respectively,
$\tens{X}=\vec{a}\outprod\vec{b}\outprod\vec{c}$ is an \byby{n}{m}{l}
tensor with $x_{ijk} = a_ib_jc_k$. A tensor that is an outer product
of three vectors has \emph{tensor rank $1$}.

Finally, if $\tens{X}$ and $\tens{Y}$ are binary \byby{n}{m}{l}
tensors, we say that $\tens{Y}$ \emph{contains} $\tens{X}$ if
$x_{ijk}=1$ implies $y_{ijk}=1$ for all $i$, $j$, and $k$. This
relation defines a partial order of \byby{n}{m}{l} binary tensors, and
it is therefore understood that when we say that $\tens{X}$ is the
\emph{smallest} \byby{n}{m}{l} binary tensor for which some property
$P$ holds, we mean that there exists no other \byby{n}{m}{l} binary
tensors for which $P$ holds and that are contained in $\tens{X}$.

\subsection{Ranks and Factorizations}
\label{sec:ranks-factorizations}

With the basic notation explained, we  first define the CP
decomposition and tensor rank under the normal algebra, after which we
explain how the Boolean concepts differ. After that we define the Boolean Tucker decomposition.

\paragraph{Tensor Rank and CP Decomposition}
\label{sec:tensor-rank-cp}
The so-called CP factorization,\!\footnote{The name is short for two names
  given to the same decomposition: CANDECOMP~\cite{carroll70analysis}
  and PARAFAC~\cite{harshman70foundations}.} we are studying in this paper is defined as follows

\begin{problem}[CP  decomposition]
 \label{def:CP}
  Given tensor $\tens{X}$ of size  \byby{n}{m}{l}  and an integer $r$, find matrices $\matr{A}$ (\by{n}{r}), $\matr{B}$ (\by{m}{r}), and $\matr{C}$ (\by{l}{r}) such that they minimize
  \begin{equation}
    \label{eq:CP}
    \norm{\tens{X}-\sum_{i=1}^r\vec{a}_i\outprod\vec{b}_i\outprod\vec{c}_i}^2\; .
  \end{equation}
\end{problem}

Notice that the $i$th columns of the \emph{factor matrices}
$\matr{A}$, $\matr{B}$, and $\matr{C}$ define a rank-$1$ tensor
$\vec{a}_i\outprod\vec{b}_i\outprod\vec{c}_i$. In other words, the CP
decomposition expresses the given tensor as a sum of $r$ rank-$1$ tensors.



Using the CP decomposition, we can define the tensor rank analogous to
the matrix (Schein) rank as the smallest $r$ such that the tensor can be
exactly decomposed into a sum of $r$ rank-$1$ tensors. Note that,
unlike the matrix rank, computing the tensor rank is \NP-hard~\cite{hastad90tensor}.

\paragraph{The Boolean Tensor Rank and Decompositions}
\label{sec:bool-tens-decomp}
The Boolean versions of tensor rank and CP
decomposition are rather straight forward to define given their normal counterparts. One only needs to change the summation to $1+1=1$. Notice that this does not change the definition of a rank-$1$ tensor (or vector outer product). Thence, a 3-way Boolean rank-$1$ tensor is a tensor that is an outer product of three binary vectors.

\begin{definition}[Boolean tensor rank]
  \label{def:Brank}
  The \emph{Boolean rank} of a 3-way binary tensor $\tens{X}$, $\rankB(\tens{X})$, is the least integer $r$ such that there exist $r$ rank-$1$ binary tensors with
  \begin{equation}
    \label{eq:Brank}
    \tens{X} = \bigvee_{i=1}^r \vec{a}_i \outprod \vec{b}_i \outprod \vec{c}_i\; . 
  \end{equation}
\end{definition}

The Boolean CP decomposition follows analogously. Instead of subtraction, we take the element-wise exclusive~or (denoted by $\oplus$), and instead of sum of squared values, we simply count the number of non-zero elements in the residual. Note, however, that with all-binary data, our error function is equivalent to the squared Frobenius error.

\begin{problem}[Boolean CP decomposition]
 \label{def:BCP}
  Given an \byby{n}{m}{l} binary tensor $\tens{X}$ and an integer $r$, find binary matrices $\matr{A}$ (\by{n}{r}), $\matr{B}$ (\by{m}{r}), and $\matr{C}$ (\by{l}{r}) such that they minimize
  \begin{equation}
    \label{eq:BCP}
    \abs{\tens{X}\oplus\left(\bigvee_{i=1}^r\vec{a}_i\outprod\vec{b}_i\outprod\vec{c}_i\right)}\; .
  \end{equation}
\end{problem}

Analogous to the normal CP decomposition, the Boolean CP decomposition
can be seen as a (Boolean) sum of $r$ binary rank-$1$ tensors. 
Unsurprisingly, both finding the Boolean rank of a tensor and finding
its minimum-error rank-$r$ Boolean CP decomposition are \NP-hard~\cite{miettinen11boolean}.

\paragraph{Boolean Tucker decompositions}
Given a (binary) tensor, its Tucker decomposition contains a \emph{core tensor} and three factor matrices. The number of rows in the factor matrices are defined by the dimensions of the original tensor while the number of columns in them are defined by the dimensions of the core tensor. In case of the Boolean Tucker decomposition, all involved tensors and matrices are required to be binary, and the arithmetic is again done over the Boolean semi-ring. The Boolean Tucker decomposition is defined formally as follows.

\begin{problem}[Boolean Tucker decomposition]
  \label{def:BTucker}
  Given an \byby{n}{m}{l} binary tensor $\tens{X} = (x_{ijk})$ and three integers $p$, $q$, and $r$, find the minimum-error ($p,q,r$) \emph{Boolean Tucker decomposition}  of $\tens{X}$, that is, tuple $(\tens{G}, \matr{A}, \matr{B}, \matr{C})$, where  $\tens{G}$ is a \byby{p}{q}{r} binary \emph{core tensor} and $\matr{A}$ (\by{n}{p}), $\matr{B}$ (\by{m}{q}), and $\matr{C}$ (\by{l}{r}) are binary \emph{factor matrices}, such that $(\tens{G}, \matr{A}, \matr{B}, \matr{C})$ minizes
  \begin{equation}
    \label{eq:BTucker}
    \sum_{i,j,k} \left(x_{ijk} \oplus\left( \bigvee_{\alpha=1}^{p}\bigvee_{\beta=1}^{q}\bigvee_{\gamma=1}^{r} g_{\alpha\beta\gamma}\,a_{i\alpha}b_{j\beta}c_{k\gamma} \right)\right)\; .
  \end{equation}
\end{problem}

For a schematic view of Tucker decomposition, see Figure~\ref{fig:tucker}.

\begin{figure}
  \centering
  \includegraphics[width=.7\columnwidth]{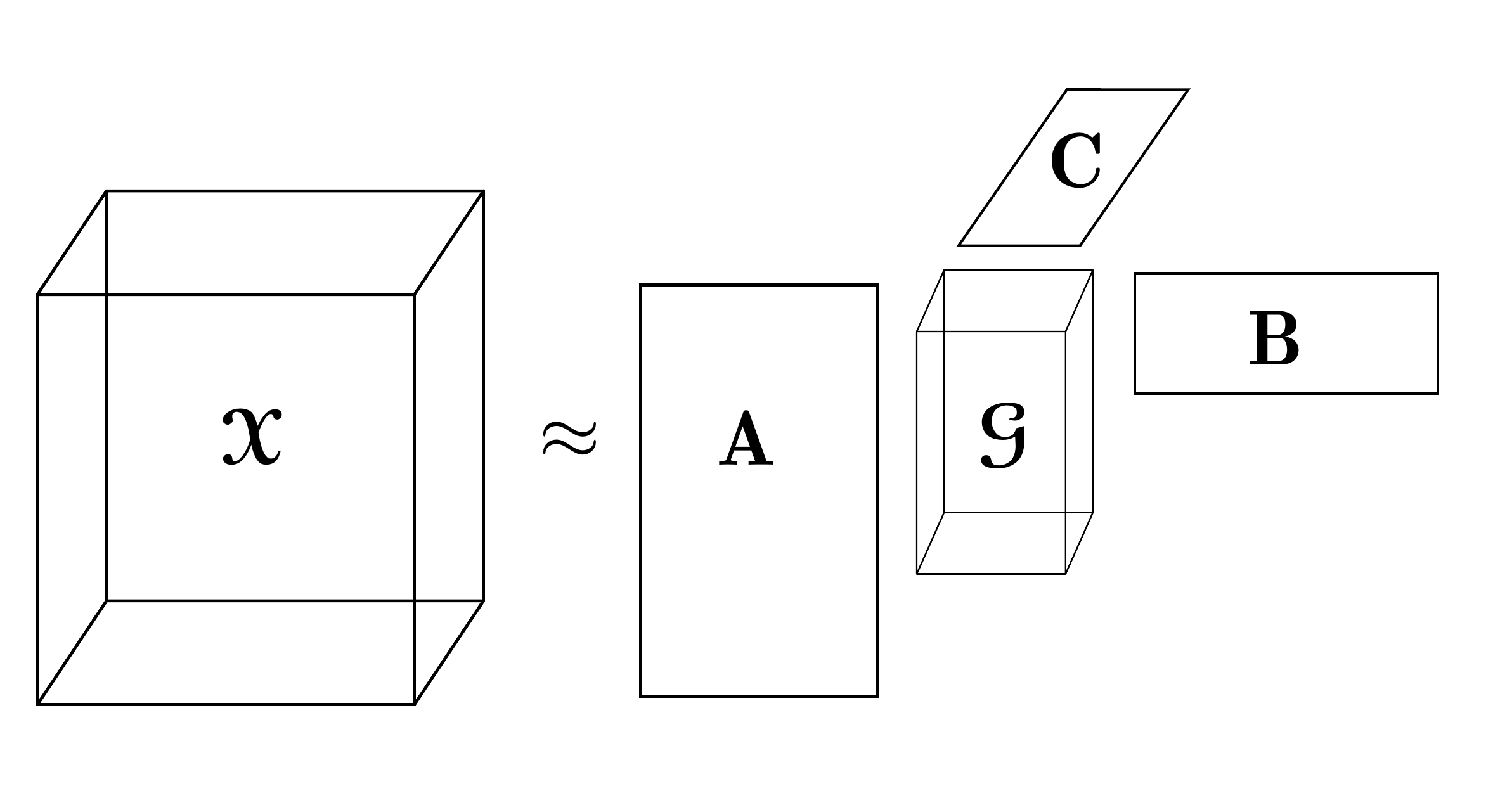}
  \caption{Tucker tensor decomposition.}
  \label{fig:tucker}
\end{figure}




\subsection{Blocks, Convex Hulls, and Factorizations}
\label{sect:blocks-convex-hulls}

Let $\tens{X}$ be a binary \byby{n}{m}{l} tensor and let $X\subseteq [n]$, $Y\subseteq [m]$, and $Z\subseteq [l]$, where $[x] = \{1, 2, \ldots, x\}$. A \emph{block} of \tens{X} is a \byby{\abs{X}}{\abs{Y}}{\abs{Z}} sub-tensor $\tens{B}$ that is formed by taking the rows of $\tens{X}$ defined by $X$, columns defined by $Y$, and tubes defined by $Z$. Block $\matr{B}$ is \emph{monochromatic} if all of its values are $1$. We will often (implicitly) embed $\tens{B}$ to \byby{n}{m}{l} tensor by filling the missing values with $0$s. If $\tens{B}$ is monochromatic it is (embedded or not) a rank-1 tensor. If $\tens{B}$ is not monochromatic, we say it is \emph{dense}.

Now let the sets $I$, $J$, and $K$ be such that they contain the indices of all the non-zero slices of $\tens{X}$. That is,
$I=\{i : x_{ijk} = 1 \text{ for some } j, k\}$, $J = \{j : x_{ijk} = 1
\text{ for some } i, k\}$, and $K = \{k : x_{ijk} = 1 \text{ for some
} i, j\}$. The \emph{convex hull} of $\tens{X}$ is a binary
\byby{n}{m}{l} tensor $\tens{Y}$ that has $1$ in every position
defined by the Cartesian product of $I$, $J$, and $K$, $I\times J\times
K = \{(i,j,k) : i\in I, j\in J, k\in K\}$ .

The following lemma will explain the connection between monochromatic blocks (rank-1 tensors) and convex hulls. We utilize this lemma throughout our algorithms by searching for convex blocks rather than explicitly rank-$1$ tensors in the data.



\begin{lemma}
  \label{lemma:convex_hull}
  Let $\tens{X}$ be a binary \byby{n}{m}{l} tensor. Then the convex
  hull of $\tens{X}$ is the smallest \byby{n}{m}{l} rank-1 binary
  tensor that contains $\tens{X}$. 
\end{lemma}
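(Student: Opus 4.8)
The plan is to establish two things: first, that the convex hull $\tens{Y}$ is itself a rank-1 binary tensor containing $\tens{X}$; and second, that $\tens{Y}$ is contained in \emph{every} rank-1 binary tensor containing $\tens{X}$. Together these give more than the statement asks, since they identify $\tens{Y}$ as the unique minimum (not merely a minimal element) of the set of rank-1 tensors containing $\tens{X}$, ordered by containment.

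For the first part I would simply exhibit the three vectors realizing $\tens{Y}$ as an outer product. Define $\vec{a}\in\B^n$ by $a_i = 1 \iff i\in I$, and analogously $\vec{b}\in\B^m$, $\vec{c}\in\B^l$ from $J$ and $K$. Then $(\vec{a}\outprod\vec{b}\outprod\vec{c})_{ijk} = a_i b_j c_k$ equals $1$ exactly when $(i,j,k)\in I\times J\times K$, which is precisely the support of $\tens{Y}$; hence $\tens{Y}$ has rank $1$. Containment of $\tens{X}$ is immediate from the definitions of $I$, $J$, $K$: if $x_{ijk}=1$, then that very index witnesses $i\in I$, $j\in J$, and $k\in K$, so $(i,j,k)\in I\times J\times K$ and $y_{ijk}=1$.

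The real content is minimality. Let $\tens{Z}=\vec{u}\outprod\vec{v}\outprod\vec{w}$ be an arbitrary rank-1 binary tensor containing $\tens{X}$; I claim $\tens{Z}$ contains $\tens{Y}$. The step that carries the weight is the observation that the support of a rank-1 binary tensor is a combinatorial box: $z_{ijk}=u_iv_jw_k=1$ forces each of $u_i$, $v_j$, $w_k$ to equal $1$ individually. Granting this, fix any $i\in I$; by definition there is a witnessing pair $(j_0,k_0)$ with $x_{ij_0k_0}=1$, and since $\tens{Z}$ contains $\tens{X}$ we get $u_i v_{j_0} w_{k_0}=1$, whence $u_i=1$. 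Thus $u_i=1$ for every $i\in I$, and symmetrically $v_j=1$ for $j\in J$ and $w_k=1$ for $k\in K$. Consequently, for every $(i,j,k)\in I\times J\times K$ we have $z_{ijk}=u_iv_jw_k=1$, so $\tens{Z}$ is $1$ wherever $\tens{Y}$ is, i.e.\ $\tens{Z}$ contains $\tens{Y}$.

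Finally I would assemble these into the conclusion: since $\tens{Y}$ is itself a rank-1 binary tensor containing $\tens{X}$ and is contained in every such tensor, any rank-1 binary tensor $\tens{Z}$ containing $\tens{X}$ that is also contained in $\tens{Y}$ must satisfy both $\tens{Y}\subseteq\tens{Z}$ and $\tens{Z}\subseteq\tens{Y}$, forcing $\tens{Z}=\tens{Y}$. Hence no \emph{other} rank-1 binary tensor containing $\tens{X}$ lies below $\tens{Y}$, which is exactly the claimed smallness. The only point demanding care is the box-support observation, because it is what excludes the possibility of some rank-1 tensor covering $\tens{X}$ through ``off-diagonal'' interactions; over the Boolean semiring there is no such cancellation, so the argument goes through cleanly.
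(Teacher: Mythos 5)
Your proposal is correct, and it takes a recognizably different route from the paper on the part that matters. The first half (exhibiting $\vec{a}$, $\vec{b}$, $\vec{c}$ from $I$, $J$, $K$ to show the hull is rank-1 and contains $\tens{X}$) coincides with the paper's proof. For minimality, however, the paper argues by contradiction and proves only the minimality stated in its definition of ``smallest'': it supposes some rank-1 $\tens{Z}\neq\tens{Y}$ contains $\tens{X}$ and is contained in $\tens{Y}$, picks a position with $z_{ijk}=0$ but $y_{ijk}=1$, notes that $z_{ijk}=a_ib_jc_k=0$ forces (say) $c_k=0$ and hence an empty slice $\matr{Z}_{::k}$, so $\matr{X}_{::k}$ would be empty, contradicting $k\in K$. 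You instead prove the stronger \emph{least-element} property directly: any rank-1 $\tens{Z}=\vec{u}\outprod\vec{v}\outprod\vec{w}$ containing $\tens{X}$ must have $u_i=1$ for every $i\in I$ (via a witness $x_{ij_0k_0}=1$), and symmetrically for $J$ and $K$, so $\tens{Z}$ contains the entire box $I\times J\times K$ and hence $\tens{Y}$. Both arguments pivot on the same fact---the support of a binary rank-1 tensor is a combinatorial box, with no Boolean cancellation---used in contrapositive form by the paper (a zero entry kills a whole slice) and in positive form by you (a one entry forces all three factor entries to one). What your route buys is uniqueness: the set of rank-1 tensors containing $\tens{X}$ has a minimum, not merely a minimal element, which immediately yields the paper's corollary that $\tens{X}$ is rank-1 iff it equals its own convex hull; your closing antisymmetry step correctly translates this back into the weaker ``smallest'' claim as the paper defines it. The argument also handles the degenerate all-zero tensor vacuously, which the paper's contradiction argument does only implicitly.
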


\begin{proof}
  Let us start by showing that the convex hull of $\tens{X}$ is indeed
  a rank-1 tensor. To that end, let $I$, $J$, and $K$ be the sets of
  indices of slices of $\tens{X}$ that have $1$s in them (i.e. $I=\{i
  : x_{ijk} = 1 \text{ for some } j, k\}$ and similarly for $J$ and
  $K$). If $\tens{Y}$ is the convex hull of $\tens{X}$, by definition
  $y_{ijk}=1$ if and only if $(i,j,k) \in I\times J\times K$. Let us
  now define three binary vectors, $\vec{a}$, $\vec{b}$, and $\vec{c}$
  (of dimensions $n$, $m$, and $l$, respectively). Let $a_i = 1$ if
  and only if $i\in I$, $b_j = 1$ if and only if $j\in J$, and $c_k =
  1$ if and only if $k\in K$. Then the outer product
  $\vec{a}\outprod\vec{b}\outprod\vec{c}$ has $1$ at position
  $(i,j,k)$ if and only if $(i,j,k) \in I\times J\times K$, that is
  $\tens{Y} = \vec{a}\outprod\vec{b}\outprod\vec{c}$.

  That $\tens{Y}$ contains $\tens{X}$ is straight forward to see. This
  means we only have to prove that there exists no other tensor that
  is rank-1, contains $\tens{X}$, and is contained in
  $\tens{Y}$. Assume, for a contradiction, that $\tens{Z}\neq\tens{Y}$ is
  such. Then, it has to be that there is a location $(i,j,k)$ for
  which $x_{ijk} = z_{ijk} = 0$ but $y_{ijk} = 1$. As $\tens{Z}$ is
  rank-$1$, we can represent it as $\tens{Z} =
  \vec{a}\outprod\vec{b}\outprod\vec{c}$ for some $\vec{a}$, $\vec{b}$, and $\vec{c}$. As
  $z_{ijk}=0$, it must be that $a_ib_jc_k=0$, that is, one of the
  three elements is $0$. Let $c_k=0$ (other cases are analogous). This
  means that the slice $\matr{Z}_{::k}$ is empty. As $\tens{Z}$
  contains $\tens{X}$, also $\matr{X}_{::k}$ must be empty. But this
  is a contradiction, as $y_{ijk}=1$ only if $k\in K$, and therefore
  $\matr{X}_{::k}$ cannot be empty. 
\end{proof}

As a corollary to Lemma~\ref{lemma:convex_hull} we get that $\tens{X}$
is rank-1 if and only if it is its own convex hull.

\paragraph{Blocks and factorizations} The key observation underlying our algorithms is the fact that both the CP as the Tucker decomposition can be thought of as a decomposition of the data tensor  $\tens{X}$ into some combination of rank-$1$ sub-tensors. While this is obvious in case of the CP decomposition, it is also easy to see for the Tucker: every triplet of  factors $a_{.\alpha}$, $b_{.\beta}$, and $c_{.\gamma}$    where $g_{\alpha\beta\gamma}$ is  non-zero defines such a rank-$1$ tensor. The main idea of our algorithm, which we will explain next, is to find dense blocks from the input data, construct their convex hulls, and build the Boolean CP or Tucker factorization from the resulting rank-1 tensors.


\section{The \walknmerge\  Algorithm}
\label{sect:walknmerge}

In this section we present the main part of our algorithm, \walknmerge, that aims to find the dense blocks from which we build the factorizations (how that is done is explained in the next section).
The \walknmerge\ algorithm contains two phases. The first phase, \rndwalk, aims at finding and removing the most prominent blocks quickly from the tensor. The second phase, \blockmerge, uses these blocks together with smaller, easier-to-find monochromatic blocks and tries to merge them into bigger blocks.

\subsubsection{Random walk algorithm}
\label{sect:random}

In this phase we represent the tensor $\tens{X}$ with a graph $G(V,E)$ that is defined as follows. For every $x_{ijk} = 1$ we have a node $v_{ijk} \in V$. 
Two nodes $v_{ijk}$ and $v_{pqr}$ are connected by an edge $(v_{ijk},v_{pqr})$ if $(i,j,k)$ and $(p,q,r)$ differ in exactly one coordinate.

Observe that a node $v_{ijk}$ is connected to all nodes in $V$  that are in the same fiber as $v_{ijk}$ in any mode of $\tens{X}$. Moreover, a monochromatic block in $\tens{X}$ corresponds to a subgraph of $G$ with radius at most $3$.
In case of noisy data, blocks are not perfectly monochromatic and some of the nodes in $V$ might be missing. Still, if the blocks are fairly dense, the radius of the corresponding subgraph is not too big.
More precisely, if $v_{ijk}$ is a node that participates in a block of density $d$, the probability of a random neighbor of $v_{ijk}$ also participating in that block is $\frac{d}{d+d'}$, where $d'$ is the density of the full tensor. 
This observation implies that if the blocks are significantly denser than the noisy part of a tensor, then a random neighbor of a node inside block $\tens{B}$ is with high probability also in $\tens{B}$.  Our first algorithm exploits this property by performing short random walks in $G$. The intuition is that if such a walk hits a node in a block, then with high probability the consecutive hops in this walk are also hitting the block.

The pseudo code for our \rndwalk\ algorithm is given in Algorithm~\ref{algo:rndwalk}. \rndwalk\ takes as an input the data tensor $\tens{X}$ and parameters controlling the length and number of the random walks, and the minimum density of the resulting blocks. After creating the graph $G(V,E)$ it finds a block $\tens{B}$ in every iteration of the algorithm by means of executing random walks. 
Nodes that have been assigned to $\tens{B}$ are removed from $V$, resulting in a smaller graph $G'(V - V_{\tens{B}},E')$ on which the subsequent random walks are executed. 

\begin{algorithm}[tb]
\begin{algorithmic}[1]
\Input $\tens{X}$, $d$, \texttt{walk\_length}, \texttt{num\_walks}, \texttt{freq}
\Output $\tens{B}_1, \tens{B}_2 \ldots \tens{B}_k$
\State create graph $G(V,E)$ from $\tens{X}$ 
\While {$V$ is not empty}
  \State $v \gets $ random node from $V$\label{ln:startwalk}
  \State \texttt{visitedNodes} $\leftarrow (v,count_v = 1)$
  \For{\texttt{num\_walks} number of times} 
    \State $v_{vis}  \leftarrow$ random node from \texttt{visitedNodes}
    \For {\texttt{walk\_length} number of times}
      \State $v' \leftarrow$ random neighbor of $v_{vis}$
      \State \texttt{visitedNodes} $\leftarrow (v',count_{v'} ++)$
    \EndFor
  \EndFor
  \State $\tens{B} \leftarrow$ empty block
  \For{$v \in\ $\texttt{visitedNodes}}
    \If{ $count_v >\ $\texttt{freq}}
      \State $\tens{B} \leftarrow v$
    \EndIf
  \EndFor 
  \State $V\setminus \mathrm{\tt convex\_hull}(\tens{B})$
  \State block $\tens{B}$ is the convex hull of nodes in $\tens{B}$
  \If {density of $\tens{B}>d$}
    \State add $\tens{B}$ to blocks
  \EndIf
\EndWhile
\State \textbf{return} blocks
\end{algorithmic}
\caption{\label{algo:rndwalk} Random walk algorithm to find blocks.}
\end{algorithm}

The block $\tens{B}$ is found by way of executing a number of random walks on $G$. The first walk is initiated from a random node in $V$. For every node we maintain a counter for the number of times any of the walks has visited that node. For any consecutive walk, we pick a random starting point among those nodes that already have a positive counter. This ensures that once we hit a block $\tens{B}$ with a walk, the consecutive walks start with higher and higher probability from within that block. 
The length of the walks is given as an input to the algorithm. In order to traverse as big part of $\tens{B}$ as possible, we make many short walks. 
Since we know that nodes corresponding to a dense block $\tens{B}$ have with high probability a  higher visit count than nodes corresponding to noise,  we abandon all nodes with visit counts less than the average. 

In order to make sure that the block we find is a rank-1 tensor (and to include those nodes we might have missed in the random walk), we take $\tens{B}$ to be the convex hull of the discovered frequent nodes. Finally, we accept $\tens{B}$ only if it has density above a user-specified threshold $d$. Before proceeding with the next iteration of \rndwalk\ we remove all nodes corresponding to $\tens{B}$, regardless of whether $\tens{B}$ was accepted.

\paragraph{Running time of \rndwalk} The crux of this algorithm is that the running time of every iteration of the algorithm is fixed and depends only on the number and length of the walks. How often we have to re-start the walks depends on how quickly we remove the nodes from the graph, but the worst-case running time is bound by $O(\abs{V}) = O(|\tens{X}|)$. However, if $\tens{X}$ contains several dense blocks, then the running time is significantly less, since all nodes corresponding to cells in the block are removed at the same time. 

\paragraph{Paralellization}  \rndwalk\ is easily parallellizable, as we can start the random walk iterations (in Line~\ref{ln:startwalk} in Algorithm~\ref{algo:rndwalk}) from several (non-neighboring) nodes at the same time. In this case it may happen that some indices are chosen in multiple blocks. We don't mind that (as $\tens{X}$ may contain partially overlapping blocks) and simply return all resulting blocks.

\subsubsection{\blockmerge\ Algorithm}
\label{sect:merge}

The \rndwalk\  algorithm is a fast method, but it is only able to reliably find the most prominent blocks. 
If a block is too small, the random walks might visit it as a part of a bigger sparse (and hence rejected) block. It can also happen that while most part of a block is found by \rndwalk, due to the randomness in the algorithm, some of its slices are not discovered.

 Therefore we present the second part of our algorithm, \blockmerge, that executes two tasks.
First it finds smaller monochromatic blocks that for some reason are undiscovered. After finding the smaller blocks, the algorithm has a merging phase, where it tries to merge some of the newly found blocks and the dense blocks found by the \rndwalk\ algorithm. The output of \blockmerge\ is a set of dense blocks.

The \blockmerge\ algorithm is akin to normal bottom-up frequent itemset mining algorithms in that it starts with elementary blocks and advances by merging these elementary blocks into bigger blocks, although without the benefit of anti-monotonicity. 

The input for \blockmerge\ is the same data tensor $\tens{X}$ given to the \rndwalk\ algorithm, the blocks already found, and the minimum density $d$. As its first step, the algorithm will find all \emph{non-trivial} monochromatic blocks of $\tens{X}$ that are not yet included in any of the blocks found earlier. A monochromatic block is non-trivial if its volume and dimensions are above some user-defined thresholds (e.g.\ all modes have at least $2$ dimensions).
We find these non-trivial blocks in a greedy fashion. We start with \emph{singletons}: elements $x_{ijk}=1$ that do not belong into any block. 
We pick one of them, $x_{ijk}$, and find all singletons that share at least one coordinate with it. Among these singletons we do an exhaustive search to find all monochromatic non-trivial blocks containing $x_{ijk}$.  As a result, for every cell that is included in a non-trivial block in $\tens{X}$, we find at least one monochromatic block it is included in, but we may not find all of them. In our implementation we maintain some practical data indices based on the coordinates defining the cells $x_{ijk}$ so that looking up neighbors of a cell takes at most $O(n+m+l)$ time.
Since the singleton blocks that remain after the initialization step could not be incorporated in any of the non-trivial blocks,  we regard them as noise, and will not consider them for merging to any other block. 

The second part of the \blockmerge\ algorithm is to try and merge the remaining blocks so that we get larger (usually not monochromatic, but still dense) blocks.
Each block $\tens{B}$ is defined by three sets of indices, $I$, $J$, and $K$, giving the row, column, and tube indices of this block. 
When we merge two blocks, $\tens{B}$ and $\tens{C}$, with indices given by $(I_{\tens{B}}, J_{\tens{B}}, K_{\tens{B}})$ and $(I_{\tens{C}}, J_{\tens{C}}, K_{\tens{C}})$, respectively, the resulting block $\tens{B}\merge\tens{C}$ has its indices given by $(I_{\tens{B}} \cup I_{\tens{C}}, J_{\tens{B}}\cup J_{\tens{C}}, K_{\tens{B}}\cup K_{\tens{C}})$. (This is equivalent on taking the convex hull of $\tens{B}\lor\tens{C}$, ensuring again that the block is rank-1.) 

The way we merge two blocks means that the resulting block can, and
typically will, include elements that were not in either of the merged
blocks. Therefore, when deciding whether to merge two blocks, we must look
how well we do in those areas that are not in either of the blocks. To that
end, we will again employ the user-defined density parameter $d$. We will
only merge two blocks if the joint density of $1$s and elements already
included in the other blocks in the area not in either of merged blocks is
higher than $d$. 

To present the above consideration more formally, let $\tens{A}$ and $\tens{B}$ be the two blocks we are currently considering to merge. $\overline{\tens{A}\lor\tens{B}} = (\tens{A}\merge\tens{B})\setminus(\tens{A}\cup\tens{B})$ is the area (monochromatic sub-tensor) in $\tens{A}\merge\tens{B}$ that is not in either $\tens{A}$ or $\tens{B},$ and $\tens{D}_1, \tens{D}_2, \ldots, \tens{D}_R$ are the rest of the non-trivial blocks we have build so far, then what we compute is the density of $1$s in $\bigvee_{r=1}^R\tens{D}_r\lor\tens{X}$ in those locations that are $1$s in $\overline{\tens{A}\lor\tens{B}}$,
that is, 
\begin{equation}
\label{eq:density}
\frac{
\sum_{i,j,k}\bigl((\overline{\tens{A}\lor\tens{B}})_{ijk}(\bigvee_{r=1}^R\tens{D}_r\lor\tens{X})_{ijk}\bigr) 
}{ 
\sum_{i,j,k}(\overline{\tens{A}\lor\tens{B}})_{ijk}
}\; .
\end{equation}

The reason for including the other blocks $\tens{D}_r$ in the equation is that we do not want to pay multiple times for the same error. Recall that our representation of $\tens{X}$ after $\tens{X}$ and $\tens{Y}$ are merged will be $\bigvee_{r=1}^R\tens{D}_r\lor(\tens{X}\merge\tens{Y})$, and hence, if we already have expressed some $0$ of $\tens{X}$ by $1$ in one of the $\tens{D}_r$'s, this error is already done, and cannot be revoked. Similarly, whatever error we will do in $\tens{X}$ or $\tens{Y}$, we will still do in $\tens{X}\merge\tens{Y}$, and therefore we only consider the area not included in either of the merged tensor.

Now the only remaining question is how to select which blocks to merge. A simple answer would be to try all possible pairs and select the best. That, however, would require us to compute quadratic number of possible merges, which in practice is too much. Instead we restrict our attention to pairs of blocks that share coordinates in at least one mode, that is, if $(I_{\tens{B}}, J_{\tens{B}}, K_{\tens{B}})$ and $(I_{\tens{C}}, J_{\tens{C}}, K_{\tens{C}})$ are as above, we would consider merging $\tens{B}$ and $\tens{C}$ only if at least one of the sets $I_{\tens{B}}\cap I_{\tens{C}}$, $J_{\tens{B}}\cap J_{\tens{C}}$, or $K_{\tens{B}}\cap K_{\tens{C}}$ is non-empty. We call a pair of blocks for which  $I_{\tens{B}}\cap I_{\tens{C}} = J_{\tens{B}}\cap J_{\tens{C}} = K_{\tens{B}}\cap K_{\tens{C}} = \emptyset$ \emph{independent}.

It is worth asking will this restriction mean we will not find all the meaningful blocks. We argue that it does not. The intuition is the following. Let $\tens{B}$ and $\tens{C}$ be the two blocks we should merge but that are independent and let their index sets be as above. If we would merge them, the majority of the volume of the new block would be outside of $\tens{B}$ or $\tens{C}$ ($(\abs{I_{\tens{B}}}+\abs{I_{\tens{C}}}) (\abs{J_{\tens{B}}}+\abs{J_{\tens{C}}}) (\abs{K_{\tens{B}}}+\abs{K_{\tens{C}}}) - \abs{I_{\tens{B}}}\abs{J_{\tens{B}}}\abs{K_{\tens{B}}} - \abs{I_{\tens{C}}}\abs{J_{\tens{C}}}\abs{K_{\tens{C}}}$, to be exact). If this area is very sparse, then so will be the whole block, and we should not have merged the two original block, after all. But if parts of that area are dense, we should find there another block that shares co-ordinates with both $\tens{B}$ and $\tens{C}$. If that block is large and dense enough, we will merge it with either $\tens{B}$ or $\matr{C}$, at which point these two blocks do share co-ordinates, and we will consider them for merging.

This, then, is how we proceed: for every block, count how good a merge it would be with every other block with shared coordinates, select the best merge and execute it, put the merged block back at the bottom of the list of blocks to consider and pick up the next block from the list until no new merges are possible. This means that we execute as many merges as possible in a single sweep of the list of the blocks, as opposed to making a merge and starting again from the begin of the list, as we consider this the faster way to perform the merges. An overview of the whole merging algorithm is presented in Algorithm~\ref{algo:merge}.

This part of the algorithm can be implemented parallel as well; first pick a block $\tens{B}$ and choose all blocks that share a cell with $\tens{B}$. These blocks are the candidates to merge with $\tens{B}$ in this iteration. Now, among the remaining blocks that are not candidates we choose another $\tens{B'}$ and find the candidates for $\tens{B'}$. We repeat this until there are no unchosen blocks left. The processing of the candidate lists to find potential merges can then be executed in parallel.
Observe that the set of candidates for different blocks may overlap.  We don't regard this as a problem and if this happens, then (provided density constraints are met) the block is simply  merged to multiple blocks.

\paragraph{Parallelization} The merging phase of the
\blockmerge\ algorithm can easily be parallelized as well. Since we only merge blocks that are not independent, it is an obvious choice to parallelize the merging procedure of independent blocks. In every iteration we find a maximal set of independent blocks in a greedy fashion;  we pick a block $\tens{B}_1$, then we pick a block $\tens{B}_2$ from those that are independent of $\tens{B}_1$, etc.  We then consider possible merges for $\tens{B}_1, \tens{B}_2 \ldots $ with the remaining blocks in parallel. Note that any block can be considered for merge in more than one of the threads and as a result may end up being merged with several different blocks.

\begin{algorithm}[tb]
\begin{algorithmic}[1]
\Input Data $\tens{X}$, threshold $d$, blocks $B=\{\tens{B}_1, \tens{B}_2, \ldots, \tens{B}_r\}$ from random walk
\Output Final blocks $\tens{B}_1, \tens{B}_2 \ldots \tens{B}_k$
\State find all non-trivial monochromatic blocks $\tens{B}$ of size at least \byby{2}{2}{2} not included in blocks in $B$ 
\For {$\tens{B}$ is a non-trivial monochromatic block}
  \State add $\tens{B}$ to $B$
\EndFor
\State let $Q$ be a queue of all the blocks in $B$
\While{$Q$ is not empty}
  \State $\tens{B} \gets Q.\text{\texttt{pop}}$
  \ForAll {$\tens{C}$ that shares co-ordinates with $\tens{B}$ in at least one mode}
    \State  compute the density of $\tens{B}\merge\tens{C}$ 
    \If {density $> d$}
      \State $Q$.\texttt{push}($\tens{B}\merge\tens{C}$)
      \State replace $\tens{B}$ and $\tens{C}$ in $B$ with $\tens{B}\merge\tens{C}$
      \State \textbf{break}
    \EndIf
  \EndFor
\EndWhile
\State \textbf{return} $B$
\end{algorithmic}
\caption{\label{algo:merge} \blockmerge\ algorithm for merging blocks.}
\end{algorithm}

\paragraph{Running time of the \blockmerge\ algorithm} Let the densest fiber in $\tens{X}$ have $b = \max\{n,m,l\}\times d$ ones. Observe that any nontrivial monochromatic block 
is defined exactly by 2 of its cells. Thus for a cell $x_{ijk}$ we can compute all nontrivial monochromatic blocks containing it in $b^2$ time by checking all blocks defined  by pairs of ones in fibers $i$, $j$ and $k$. This checking takes constant time. Hence, the first part of the algorithm takes $O(Bb^2)$ time if there are $B$ trivial blocks in the data. In worst case $B = |\tens{X}|$.
 The second part of the algorithm is the actual merging of blocks. If there are $D$ blocks at the begin of this phase, we will try at most $\binom{D}{2}$ merges. 
The time it takes to check whether to merge depends on the size of the two blocks involved. Executing the merge $ \tens{A} =\tens{B}\merge\tens{C}$ takes at most $|\tens{A}|$ time. In worst case $|\tens{A}| = |\tens{X}|$.
As a result, a very crude upper bound on the running time can be given as $O(|\tens{X}|(b^2+D^3))$.  

\section{From Blocks to Factorizations}
\label{sect:post-process}

The \walknmerge\ algorithm only returns us a set of rank-1 tensors, corresponding to dense blocks in the original tensor. To obtain the final decompositions, we will have to do some additional post-processing.

\subsection{Ordering and Selecting the Final Blocks for the CP-decomposition}
\label{sec:ordering}

We can use all the blocks returned by \walknmerge\ to obtain a Boolean CP factorization. 
The rank of this factorization, however, cannot be controlled, as it is the number of blocks \walknmerge\ returned. Furthermore, it can be that it is better to not use all these blocks but only a subset of them. Ideally, therefore, we would like to be able to select a subset of the blocks such that together they give the CP-decomposition that minimizes the error. It turns out, however, that even this simple selection task is computationally very hard.

\begin{proposition}
  \label{prop:pmpsc} Given a binary \byby{n}{m}{l} tensor $\tens{X}$, and a set $B$ of $r$ binary rank-1 tensors of the same size (blocks), $B = \{\tens{B}_1, \tens{B}_2, \ldots, \tens{B}_r\}$, it is \NP-hard to select $B^*\subset B$ such that 
  \begin{equation}
    \label{eq:pmpsc}
    \abs{\tens{X} \oplus \bigvee_{\tens{B}\in B^*}\tens{B}} 
  \end{equation}
  is minimized. Furthermore, for any $\varepsilon > 0$, it is quasi-\NP-hard to approximate~\eqref{eq:pmpsc} to within $\Omega\left(2^{(4\log r)^{1-\varepsilon}}\right)$ and \NP-hard to approximate it to within $\Omega\left(2^{\log^{1-\varepsilon}\abs{\tens{X}}} \right)$. 
\end{proposition}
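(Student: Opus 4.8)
The plan is to recognise \eqref{eq:pmpsc} as an instance of the \emph{Positive--Negative Partial Set Cover} problem ($\pm$\textsc{PSC}), for which strong hardness and inapproximability results are already known, and then to show that these transfer to our block-selection problem essentially verbatim.

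First I would unpack the objective. Writing $U = \bigvee_{\tens{B}\in B^*}\tens{B}$, the error $\abs{\tens{X}\oplus U}$ counts two kinds of mistakes: positions where $x_{ijk}=1$ but $u_{ijk}=0$ (the $1$s of $\tens{X}$ left uncovered by the chosen blocks), and positions where $x_{ijk}=0$ but $u_{ijk}=1$ (the $0$s of $\tens{X}$ erroneously covered). Identifying each tensor position with a ground element, the $1$-positions of $\tens{X}$ with the \emph{positive} elements, the $0$-positions with the \emph{negative} elements, and each block $\tens{B}_i$ with the set of positions it covers, the quantity \eqref{eq:pmpsc} is exactly the $\pm$\textsc{PSC} objective: minimise (uncovered positives) $+$ (covered negatives).

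The only gap is that our ``sets'' are constrained to be rank-$1$ tensors (combinatorial boxes), whereas $\pm$\textsc{PSC} allows arbitrary sets. I would close this gap by the simplest possible reduction: restrict attention to \byby{N}{1}{1} tensors, i.e.\ length-$N$ binary vectors. A binary vector is a rank-$1$ tensor ($\vec{b}\outprod\vec{e}\outprod\vec{e}$ with $\vec{e}=(1)$), and an arbitrary subset $S\subseteq[N]$ is realised exactly by its indicator vector. Hence, given any $\pm$\textsc{PSC} instance with ground set $[N]$, sets $S_1,\dots,S_r$, and a positive/negative labelling, I set $\tens{X}$ to the indicator vector of the positive elements and $\tens{B}_i$ to the indicator vector of $S_i$. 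Then $\bigvee_{i\in B^*}\tens{B}_i$ is the indicator of $\bigcup_{i\in B^*}S_i$, and \eqref{eq:pmpsc} equals the $\pm$\textsc{PSC} cost. This map is a cost-preserving bijection between solutions, so it preserves both exact optimisation and approximation ratios; in particular the restricted (vector) problem \emph{is} $\pm$\textsc{PSC}, and the general tensor problem is at least as hard.

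Finally I would invoke the known hardness of $\pm$\textsc{PSC}, which itself inherits the inapproximability of \textsc{Red-Blue Set Cover}. This yields \NP-hardness of exact optimisation together with the two inapproximability statements, once the parameters are matched: the number of sets becomes the number of blocks $r$, and the number of ground elements becomes the number of tensor positions. The step requiring care, and the main obstacle, is the parameter translation for the bound phrased in $\abs{\tens{X}}$: $\abs{\tens{X}}$ counts only the non-zero entries, whereas the set-cover bound is naturally stated in the total number of elements. I would handle this by observing that the construction can be padded so that the ground-set size is polynomially bounded in $\abs{\tens{X}}$, and that the form $2^{\log^{1-\varepsilon}(\cdot)}$ is robust under polynomial changes of its argument (a polynomial blow-up only rescales $\varepsilon$), so that the stated $\Omega\bigl(2^{\log^{1-\varepsilon}\abs{\tens{X}}}\bigr)$ and $\Omega\bigl(2^{(4\log r)^{1-\varepsilon}}\bigr)$ bounds both survive. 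This bookkeeping, rather than any combinatorial difficulty, is the only real work, since the reduction itself is immediate.
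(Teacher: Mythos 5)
Your proposal is correct and takes essentially the same route as the paper: the paper reduces from the Basis Usage problem of~\cite{miettinen08discrete}, whose inapproximability bounds (cited from~\cite{miettinen08positive-negative}) are themselves established via exactly the $\pm$\textsc{PSC} problem you invoke, so you have merely unfolded one level of indirection. The crucial step is identical in both arguments --- realizing arbitrary sets as admissible blocks by passing to degenerate $N\times 1\times 1$ tensors, where every binary vector is rank-$1$ (the paper's ``a vector is a tensor'' remark) --- and your bookkeeping on the $\abs{\tens{X}}$ versus ground-set-size parameter is, if anything, more careful than the paper's own statement.
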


\begin{proof}
  For the proof, we need the following result: Consider the Basis Usage (BU) problem~\cite{miettinen08discrete}, where we are given a binary $n$-dimensional vector $\vec{a}$ and a binary \by{n}{r} matrix $\matr{B}$, and the task is to find a binary $r$-dimensional vector $\vec{x}$ such that we minimize the Hamming distance between $\vec{a}$ and $\matr{B}\bprod\vec{x}$ (where $\bprod$ is the matrix product with Boolean addition). This problem is \NP-hard to approximate within $\Omega\left(2^{\log^{1-\varepsilon}\abs{\tens{X}}} \right)$ and quasi-\NP-hard to approximate within $\Omega\left(2^{(4\log r)^{1-\varepsilon}}\right)$~\cite{miettinen08positive-negative}. 

The BU problem is equivalent to the problem of selecting the blocks: Take the tensor $\tens{X}$ and write it as a long ($nml$-dimensional) binary vector. This will be the vector $\vec{a}$ of the BU problem. Vectorize the blocks in $B$ in the same way; these will be the columns of $\matr{B}$ in the BU problem. Now, the Boolean product $\matr{B}\bprod\vec{x}$ is equivalent to taking those columns of $\matr{B}$ for which the corresponding row of $x$ is 1 and taking their Boolean sum. But this is the same as selecting some of the blocks in $B$ and taking their Boolean sum. Furthermore, the error metrics are the same (number of element-wise disagreements). This shows that we can reduce the block selection problem to the BU problem. For the other direction it suffices to note that a vector is a tensor, and therefore, the BU problem is merely a special case of the block selection problem. 
\end{proof}

Given Proposition~\ref{prop:pmpsc}, we cannot hope for always finding the optimal solution. But luckily the same proposition also tells us how to solve the block selection problem given that we know how to solve the BU problem. Therefore we will use the greedy algorithm proposed in~\cite{miettinen08discrete}: We will always select the block that has the highest gain given the already-selected blocks. The gain of a block is defined as the number of not-yet-covered $1$s of $\tens{X}$ minus the number of not-yet-covered $0$s of $\tens{X}$ covered by this block, and an element $x_{ijk}$ is covered if $b_{ijk}=1$ for some already-selected block. 

The greedy algorithm has the benefit that it gives us an ordering of the blocks, so that if the user wants a rank-$k$ decomposition, we can simply return the first $k$ blocks, instead of having to re-compute the ordering. 

\subsection{The MDL Principle and Encoding the Data for the CP decomposition}
\label{sect:MDL}

The greedy algorithm in the previous section returns an
ordering of the columns of matrices $\matr{A}$,
$\matr{B}$ and $\matr{C}$ of the CP-decomposition. However, this still does
not tell us the optimal rank of the decomposition. In order to choose the
best rank $r$ for the decomposition we apply the \emph{Minimum
  Description Length} (MDL) principle~\cite{rissanen78modeling} to the
encoding of the obtained decomposition. In this section we explain how this
is done.

\paragraph{Minimum Description Length  Principle} 
The intuition behind the MDL principle is
that the best model is the one that allows us to compress the data
best. For our application that means that we should choose the rank $r$ of
the CP decomposition in such a way that the size of the resulting
compression is minimal.

To compute the encoding length of the data, we use the two-part (or crude) MDL: if $\tens{D}$ is our data (the data tensor) and $\mathcal{M}$ is a model of it (often called \emph{hypothesis} in the MDL literature), we aim to minimize
 $L(\mathcal{M}) + L(\tens{D}\mid \mathcal{M})$,
where $L(\mathcal{M})$ is the number of bits we need to encode $\mathcal{M}$ and  $L(\tens{D}\mid \mathcal{M})$ is the number of bits we need to encode the data \emph{given} the model $\mathcal{M}$.

In our application, the model $\mathcal{M}$ is the Boolean CP decomposition
of the data tensor. As MDL requires us to explain the data exactly, we
also need to encode the differences between the data and its (approximate)
decomposition; this is the $\tens{D}\mid \mathcal{M}$ part. 

The intuition of using the MDL principle lies in the following simple observation:
When we move from the rank-$r$ to the rank-$(r+1)$ decomposition defined by
$\matr{A}\outprod\matr{B}\outprod\matr{C}$ two things happen. First, the size of the
factor matrices increases (and so does $L(\mathcal{M})$). Second, (hopefully) the
reconstruction error decreases (and so does $L(\tens{D}\mid \mathcal{M})$). Hence our goal is to find the rank $r$
where the trade off between the encoding of $\mathcal{M}$  and
$L(\tens{D}\mid \mathcal{M})$ is optimal.

We will now explain how we compute the encoding length. For this, we modify the \emph{Typed XOR Data-to-Model encoding} for encoding Boolean matrix factorizations~\cite{miettinen12mdl4bmf}. But first, let us emphasize two details. First, we are not interested on the actual encoding lengths; rather, we are interested on the \emph{change} on the encoding lengths between two models. We can therefore omit all the parts that will not change between two models. Second, we are not interested on creating actual encodings, only computing the encoding lengths. We are therefore perfectly happy with fractional bits and will omit the rounding to full bits for the sake of simplicity. Also, the base of the logarithm does not matter (as long as we use the same base for all logarithms); the reader can consider all the logarithms in this chapter taken on base 2.

We will first explain how to encode the model $\mathcal{M}$, that is, the
tuple $(\matr{A}, \matr{B}, \matr{C})$ that defines a Boolean CP
decomposition of a 3-way binary tensor. The first thing we need to encode
is the size of the original tensor, $n$, $m$, and $l$ and the rank $r$ of
the decomposition. For this, we can use any universal code for nonnegative
integers, such as the Elias Delta code~\cite{elias75universal}, taking $\Theta(\log x + 2\log\log x)$ bits per integer $x$. In practice
we can omit the numbers $n$, $m$, and $l$ and only compute the length of $r$, as the former do not change between two decompositions of the same data tensor.

We encode the factor matrix $\matr{A}$ (other factor matrices follow
analogously). 
 We note first that the size of $\matr{A}$ has already been encoded in the
 size of the data tensor and $r$. Let us assume $\matr{A}$ has $r$ factors (i.e.\ columns) and $n$ rows. We encode each factor $\vec{a}_i$ (which is just a binary vector) separately by enumerating over all $n$-dimensional binary vectors with $\abs{\vec{a}_i}$ $1$s in some fixed order, and storing just the index of the vector we want to encode in this enumeration. As there are $\binom{n}{\abs{\vec{a}_i}}$ such binary vectors, storing this index takes $\log \binom{n}{\abs{\vec{a}_i}}$ bits. (Note that we do not need to do the actual enumeration, as we only need to know the number of bits storing the number would take.) To be able to reverse this computation, we need to encode the number $\abs{\vec{a}_i}$; this takes $\log n$ bits, and so in total a single factor takes $\log\binom{n}{\abs{\vec{a}_i}}+\log n$ bits and the whole matrix 
$
p\log n + \sum_{i=1}^p\binom{n}{\abs{\vec{a}_i}}\; .
$


With the length of encoding the model computed, we still need to compute $L(\tens{D}\mid \mathcal{M})$, that is, the difference between the approximation induced by the decomposition and the actual data. Following~\cite{miettinen12mdl4bmf}, we split this difference into two groups: false positives (elements that are $1$ in the approximation but $0$ in the data) and false negatives (elements that are $0$ in the approximation but $1$ in the data). We can represent the false positives using a binary \byby{n}{m}{k} tensor $\tens{F}_+$ that has $1$ in each of the positions that are false positives in the approximation and $0$ elsewhere. We can encode this tensor by unfolding it into a long binary vector and using the same approach we used to encode the factors. The size of the tensor has already been encoded (it is the same size as the data). The na\"ive upper bound to the number of $1$s in $\tens{F}_+$ is $nmk$, but in fact we know that we can only make a false positive if the approximation is $1$. Therefore, if the number of $1$s in the approximation is $\abs*{\widetilde{\tens{D}}}$, we can encode the number of $1$s in $\tens{F}_+$ using $\log \abs*{\widetilde{\tens{D}}}$ bits. Using the same numbering scheme as above, we still need $\log\binom{nmk}{\abs{\tens{F}_+}}$ bits to encode the contents of the tensor. 

We can encode the false negative tensor $\tens{F}_-$ analogously, except that the upper bound for $1$s is $nmk - \abs*{\widetilde{\tens{D}}}$. In summary we have that $L(\tens{D}\mid \mathcal{M})$ is 
\[
\log \abs*{\widetilde{\tens{D}}} + \log\binom{nmk}{\abs{\tens{F}_+}} 
+ \log(nmk - \abs*{\widetilde{\tens{D}}}) +  \log\binom{nmk}{\abs{\tens{F}_-}}\; .
\]

Having the encoding in place, we can simply compute the change of
description length for every rank $1\leq r \leq B$ and return $r$ where
this value is minimized. The corresponding (truncated) matrices $\matr{A}$,
$\matr{B}$ and $\matr{C}$ are the factors of the final CP decomposition
that our algorithm returns.

\subsection{Encoding the Data for the Tucker decomposition}
\label{sec:mdl-princ-encod}

Similar to obtaining a CP decomposition from the blocks returned by
\walknmerge\ these blocks also define a trivial Tucker decomposition of the
same tensor. The factor matrices $\matr{A}$,
$\matr{B}$ and $\matr{C}$ are defined the same way as for the CP. 
 The core
$\tens{G}$ of the Tucker decomposition is a $\byby{B}{B}{B}$ size tensor
 with ones in its hyperdiagonal.  

Our goal is to obtain a more compact decomposition starting from
this trivial one by merging some of the factors and adjusting the dimensions and content of the core accordingly. We want to allow the merge of two factors even if it would increase the error slightly. But how to define when error is increasing too much and merge should not be made? To solve that problem, we again use the MDL principle.

\paragraph{Encoding the Boolean Tucker decomposition} 
%
%
The model $\mathcal{M}$ we want to encode is the Boolean Tucker
decomposition of the data tensor, that is, a tuple $(\tens{G}, \matr{A}, \matr{B}, \matr{C})$. Encoding the size of the data
tensor as well as the content of the factor matrices is done in the same
way as for the CP decomposition. As the size of the core tensor determines
the size of the factor matrices, we do not need to encode it
separately. To encode the core tensor, we ned to encode its dimensions $p$, $r$, and $q$. For this, we again use the Elias delta coding. The actual core we encode similarly to how we encoded the error tensors with the CP factorization, that is, we unfold the core into a long binary vector and encode that vector using its index in the enumeration. This takes $\log pqr + \log \binom{pqr}{\abs{\tens{G}}}$ bits. Again, remember that we do not need to compute the actual index, only how many bits storing it would take.


Finally the positive and negative error tensors are identical to the ones
in the CP decomposition and hence are encoded in the same way.

\paragraph{Applying the MDL principle} Given the encoding scheme we can use a straight forward heuristic to obtain the final Tucker
decomposition starting from the trivial one determined by the output of
\walknmerge. In every mode and for every pair of factors we compute
the description length of the resulting decompositions if we were to merge
these two factors. Ideally we would compute all possible merging sequences
and pick the one with the highest overall gain in encoding length. This is
of course infeasible, hence we follow a greedy heuristic and apply every
merge that yields an improvement. An overview of this procedure is given in
Algorithm~\ref{algo:mdl}. We use the notation
$\mdl(\tens{G},\matr{A},\matr{B},\matr{C})$ to indicate the encoding length
of a Tucker
decomposition. $\mdl(\tens{G},\matr{A},\matr{B},\matr{C},f_1,f_2)$ indicates
the encoding length if factors $f_1$ and $f_2$ would be merged.

One question is what the merged factor
should be. Let us assume we are considering merging factors $f_1$ and
$f_2$. Trivial solutions would be to either take the union ($f_1 \cup f_2$) or the
intersection ($f_1 \cap f_2$) of the indices in the two factors. We found that both
approaches perform poorly. Instead we apply a greedy heuristic that makes a
decision for every index in the union.  The basis of the merged factor is $f_1 \cap f_2$. If the
intersection of the factors is empty, we move on and don't merge them. If
it is not, then for every element in the symmetric difference we make a
greedy decision whether to include it in the merged factor or not. For this we
compute the change in the encoding length of the whole decomposition with
or without that element. Bear in mind that in order to compute this, we
have to check every block (thus combination of factors as indicated  by the
current core tensor) that this factor participates in. If we are able to
find a merged factor that decreases the overall encoding length, then we
always execute this merge. The algorithm finishes when there is no merges
executed anymore.

\begin{algorithm}[tb]
\begin{algorithmic}[1]
\Input Data $\tens{X}$, threshold $d$, blocks $B=\{\tens{B}_1, \tens{B}_2, \ldots, \tens{B}_r\}$ from random walk
\Output $\tens{G}$, $\matr{A}, \matr{B}, \matr{C}$ of the Tucker decomposition
\State create trivial Tucker decomposition $\tens{G},\matr{A},\matr{B}.\matr{C}$
\State $Len \gets \mdl(\tens{G},\matr{A},\matr{B}.\matr{C})$ 
\Repeat 
\ForAll {$\vec{a}_i,\vec{a}_j \in \tens{A}$}
  \State $newLen \gets \mdl(\tens{G},\matr{A},\matr{B}.\matr{C},\vec{a}_i,\vec{a}_j)$
  \If {$newLen < Len$}
  \State $Len\gets newLen$
  \State merge($\vec{a}_i,\vec{a}_j$)
  \EndIf
\EndFor
\ForAll {$\vec{b}_i,\vec{b}_j \in \tens{B}$}
  \State $newlen \gets \mdl(\tens{G},\matr{A},\matr{B}.\matr{C},\vec{b}_i,\vec{b}_j)$
  \If {$newLen < Len$}
  \State $Len \gets newLen$
  \State merge($\vec{b}_i,\vec{b}_j$)
  \EndIf
\EndFor
\ForAll {$\vec{c}_i,\vec{c}_j \in \tens{C}$}
  \State $newLen \gets \mdl(\tens{G},\matr{A},\matr{B}.\matr{C},\vec{c}_i,\vec{c}_j)$
  \If {$newlen < Len$}
  \State $Len \gets newLen$
  \State merge($\vec{c}_i,\vec{c}_j$)
  \EndIf
\EndFor
\Until{no more merges are performed}
\end{algorithmic}
\caption{\label{algo:mdl} Reducing the size of the Boolean Tucker
  decomposition with help of the MDL principle.}
\end{algorithm}


\section{Experimental Evaluation}
\label{sec:experimental_eval}

We evaluated our algorithms with both synthetic and real-world data.

\subsection{Other methods and Evaluation Criteria}
\label{sec:other-methods}

To the best of our knowledge, this paper is the first to present a
scalable Boolean CP decomposition algorithm. Therefore, we cannot
compare our algorithm against other \emph{Boolean} CP decomposition
algorithms with the kind of data sets we are interested about. We did try the \BCPALS\ algorithm~\cite{miettinen11boolean} (implementation from the author), but it ran out of memory in all but single dataset. Therefore we cannot report results with it. 

Instead, we used two real-valued scalable CP decomposition methods: namely \CPAPR~\cite{chi12tensors} (implementation from the Matlab Tensor Toolbox v2.5\footnote{\url{http://www.sandia.gov/~tgkolda/TensorToolbox/}}) and \ParCube~\cite{papalexakis12parcube}\footnote{\url{http://www.cs.cmu.edu/~epapalex/}}. \CPAPR\ is an alternating Poisson regression algorithm that is specifically developed for sparse (counting) data (which can be expected to follow the Poisson distribution) with the goal of returning sparse factors. The aim for sparsity and, to some extend, considering the data as a counting data, make this method suitable for comparison; on the other hand, it aims to minimize the (generalized) K--L divergence, not squared error, and binary data is not Poisson distributed.\!\footnote{Sampling Poisson distribution can give a binary matrix, but it cannot be forced to give one.}

The other method we compare against, \ParCube, uses clever sampling to find smaller sub-tensors. It then solves the CP decomposition in this sub-tensor, and merges the solutions back into one. We used a non-negative variant of \ParCube\ that expects non-negative data, and returns non-negative factor matrices. \ParCube\ aims to minimize the squared error. 

To compute the error, we used the Boolean error function~\eqref{eq:BCP} for \walknmerge\ and the squared error function~\eqref{eq:CP} for the comparison methods. This presents yet another apples-versus-oranges comparison: on one hand, the squared error can help the real-valued methods, as it scales all errors less than $1$ down; on the other hand, small errors cumulate unlike with fully binary data. To alleviate this problem, we also rounded the reconstructed tensors from \CPAPR\ and \ParCube\ to binary tensors. Instead of simply rounding from $0.5$, we tried different rounding thresholds between $0$ and $1$ and selected the one that gave the lowest (Boolean) reconstruction error. With some of the real-world data, we were unable to perform the rounding for the full representation due to time and memory limitations. For these data sets, we estimated the rounded error using stratified sampling, where we sampled $10\,000$ $1$s and $10\,000$ $0$s from the data, computed the error on these, and scaled the results.

\subsection{Synthetic Data}
\label{sec:synthetic-data}

We start by evaluating our algorithms with synthetic data. Our algorithm is aimed to reconstruct the latent structure from large and sparse binary tensors and therefore we tested the algorithms with such data. We generated sparse \byby{1000}{1500}{2000} synthetic binary tensor as follows: We first fixed parameters for the Boolean rank of the tensor and the noise to apply. We generated three (sparse) factor matrices to obtain the noise-free tensor. As we assume that the rank-1 tensors in the real-world data are relatively small (e.g.\ synonyms of an entity), the rank-1 tensors we use were approximately of size \byby{16}{16}{16}, with each of them overlapping with another block. We then added noise to this tensor. We separate the noise in two types: additive noise flips elements that are $0$ to $1$ while destructive noise flips elements that are $1$ in the noise-free tensor to $0$. The amount of noise depends on the number of $1$s in the noise-free data, that is $10\%$ of destructive noise means that we delete $10\%$ of the $1$s, and $20\%$ of additive noise means that we add $20\%$ more $1$s. 

We varied three parameters -- rank, additive noise, destructive noise, and overlap of the latent blocks -- and created five random copies for each set parameters. We measured the quality of the factorizations using the sum of squared differences~\eqref{eq:CP} for continuous-valued methods and the number of disagreements~\eqref{eq:BCP} for binary methods. We normalized the errors by the number of non-zeros in the data (e.g.\ the sum of squared values, as the data is binary). We compared the reconstruction error against both the input data (with noise) and the original noise-free data. Our goal, after all, is to recover the latent structure, not the noise. The rank of the decomposition was set to the true rank of the data for all methods. For \walknmerge\ we set the merging threshold to $1 - (n_d + 0.05)$, where $n_d$ was the amount of destructive noise, the length of the random walks was set to $5$, and we only considered blocks of size \byby{4}{4}{4} or larger. The results for varying rank and different types of noise are presented in Figure~\ref{fig:synth_cp_err}. Varying the amount of overlap did not have any effect on the results of \walknmerge, and we omit the results. Results for \ParCube\ were consistently worse than anything else and they are omitted from the plots.

\begin{figure*}
  \centering
  \begin{tabular}{@{}p{\triplefigwidth}p{\triplefigwidth}p{\triplefigwidth}@{}}
    \includegraphics[width=\triplefigwidth]{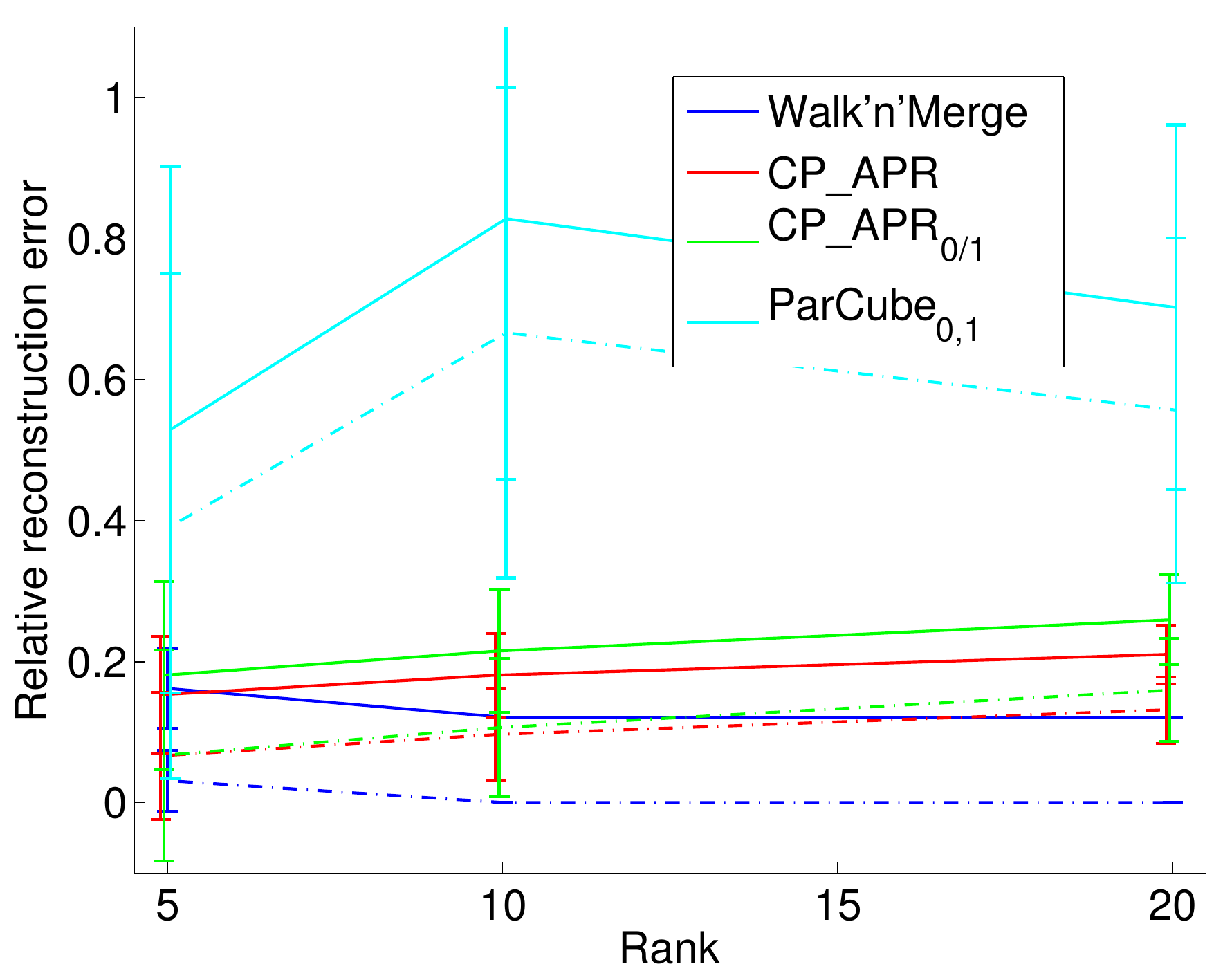} &
    \includegraphics[width=\triplefigwidth]{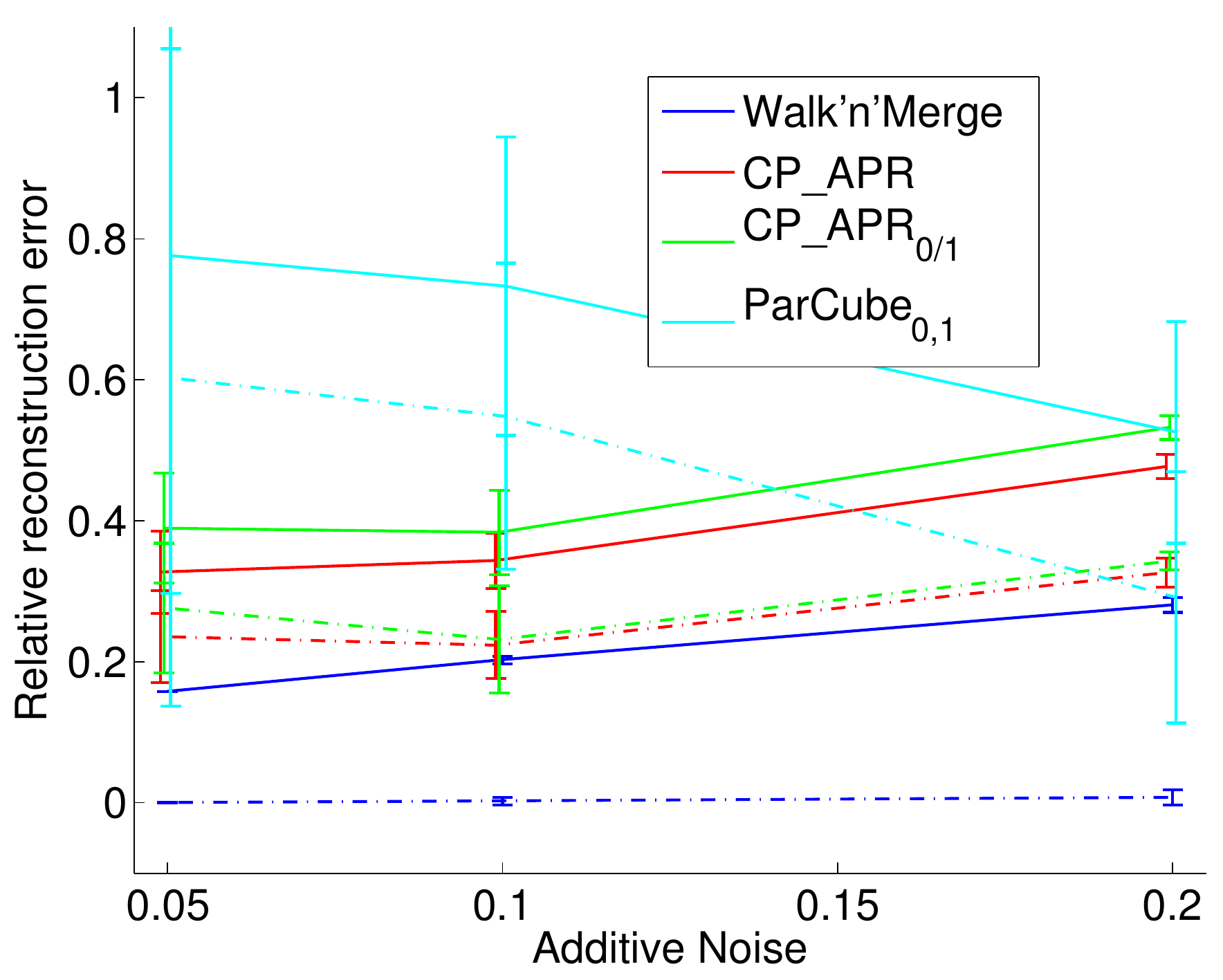} &
    \includegraphics[width=\triplefigwidth]{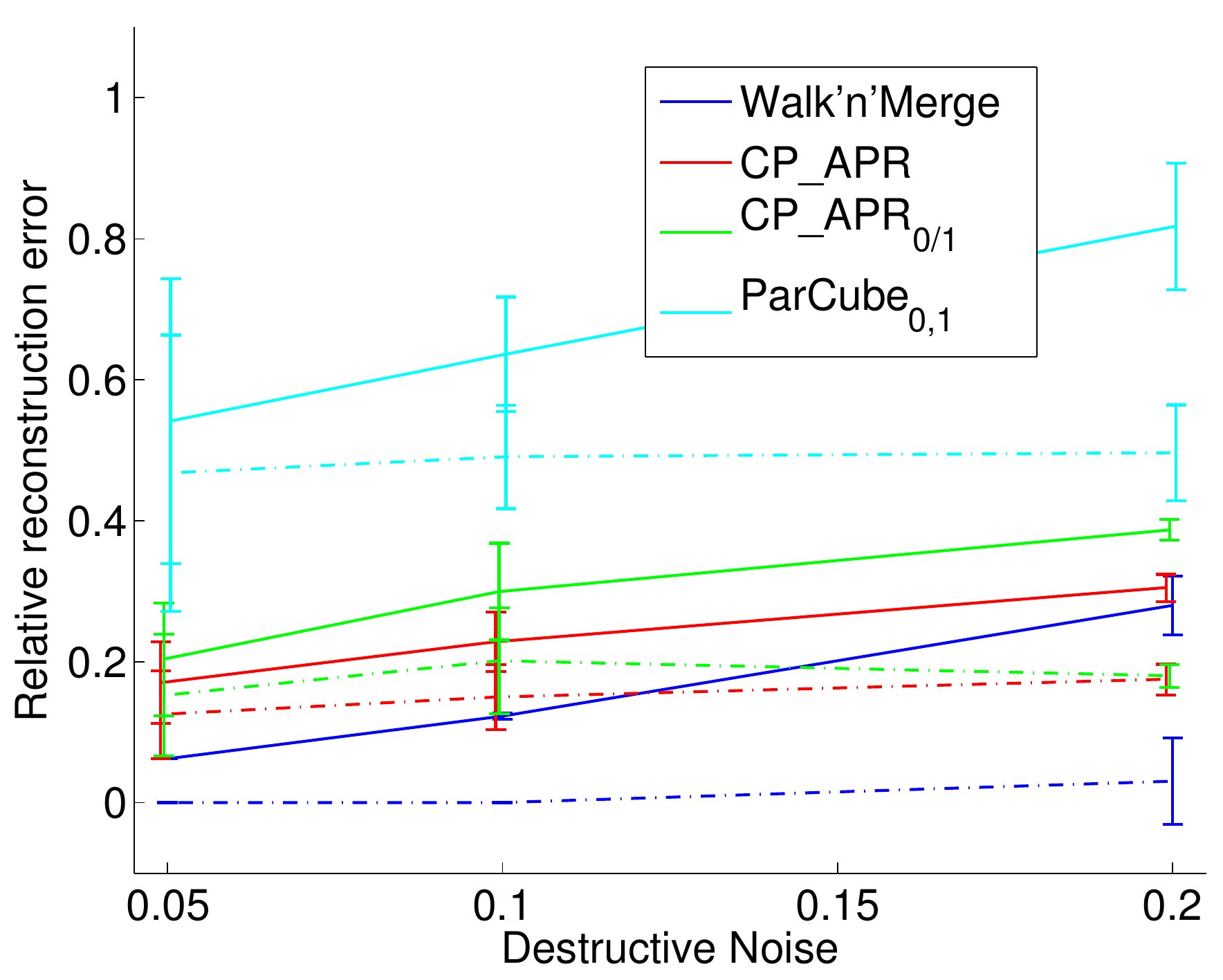} \\
    \centering (a) & \centering (b) & \centering (c) \\
  \end{tabular}
  \caption{Results on synthetic data sets using CP-type decompositions. (a) Varying rank. (b) Varying additive noise. (c) Varying destructive noise. Solid lines present the relative reconstruction error w.r.t. input tensor; dashed lines present it w.r.t. the original noise-free tensor. All points are mean values over five random datasets and the width of the error bars is twice the standard deviation. }
  \label{fig:synth_cp_err}
\end{figure*}

\paragraph{Rank}
For the first experiment (Figure~\ref{fig:synth_cp_err}(a)) we varied the rank while keeping the additive and destructive noise at $10\%$. With rank-5 decomposition, \walknmerge\ fits to the input data slightly worse than \CPAPR\ (unrounded) but clearly better than \CPAPRr\ (rounded) and \ParCuber, the latter being clearly the worse with all ranks. For larger ranks, \walknmerge\ is clearly better than variations of \CPAPR.  Note that here rank is both the rank of the data and the rank of the decomposition. When comparing the fit to the original data (dashed lines), \walknmerge\ is consistently better than the variants of \CPAPR\ or \ParCuber, to the extend that it achieves perfect results for ranks larger than $5$.

\paragraph{Additive noise}
In this experiment, rank was set to $10$, destructive noise to $10\%$, and additive noise was varied. Results are presented in Figure~\ref{fig:synth_cp_err}(b). In all results, \walknmerge\ is consistently better than any other method, and always recovers the original tensor perfectly.

\paragraph{Destructive noise}
For this experiment, rank was again set to $10$ and additive noise to $10\%$ while the amount of destructive noise was varied (Figure~\ref{fig:synth_cp_err}(c)). The results are similar to those in Figure~\ref{fig:synth_cp_err}(b), although it is obvious that the destructive noise has the most significant effect on the quality of the results. 

\paragraph{Discussion}
In summary, the synthetic experiments show that when the Boolean structure is present in the data, \walknmerge\ is able to find it -- in many cases even exactly. That \CPAPR\ is not able to do that should not come as a surprise as it does not try to find such structure. That \ParCuber\ is almost consistently the worse is slightly surprising (and the results from the unrounded \ParCube\ were even worse). From Figure~\ref{fig:synth_cp_err}(b) we can see that the results of \ParCuber\ start improving when the amount of additive noise increases. This hints that \ParCube's problems are due to its sampling approach not performing well on these extremely sparse tensors.




\subsection{Real-World Data}
\label{sec:real-world-data}

\subsubsection{Datasets}
\label{sec:datasets}

To assess the quality of our algorithm, we tested it with three
real-world data sets, namely \Enron, \TracePort, and
\FB. The \Enron\ data\footnote{\url{http://www.cs.cmu.edu/~enron/}} contains information about
who sent e-mail to whom (rows and columns) per months (tubes). The
\TracePort\ 
data set\footnote{\url{http://www.caida.org/data/passive/passive_2009_dataset.xml}}
contains anonymized passive traffic traces (source and destination IP and port numbers) from 2009. 
The \FB\ data set\footnote{The data is publicly available from the authors of~\cite{viswanath09evolution}, see \url{http://socialnetworks.mpi-sws.org}}~\cite{viswanath09evolution} contains information
about who posted a message on whose wall (rows and columns) per weeks
(tubes). Basic properties of the data sets are given in Table~\ref{tab:data}.

\begin{table}
  \topcaption{Data set properties}
  \label{tab:data}
  \centering
  \small
  \begin{tabular}{@{}lrrrr@{}}
    \toprule
    Data set     & Rows & Columns & Tubes & Density \\
    \midrule
    \Enron       & $146$ & $146$ & $38$ & $0.0023$ \\
    \TracePort & $501$ & $10266$ & $8622$ &  $2.51\times 10^{-7}$ \\
    \FB             & $63891$ & $63890$ & $228$ & $9.42\times 10^{-7}$ \\
    \bottomrule
  \end{tabular}
\end{table}

\subsubsection{CP Factorization}
\label{sec:cp_exp}

We start by reporting the reconstruction errors with CP decompositions using the same algorithms we used with the synthetic data. The results can be seen in Table~\ref{tab:results}. For \Enron, we used single rank ($r=12$) and for the other two, we used two ranks: $r=15$ and whichever gave the smallest reconstruction error by \walknmerge\ (after ordering the blocks). In case of the \FB\ data, \walknmerge\ obtained minimum error of $611\,561$, but no other method was able to finnish within $48$ hours with the higher rank ($r=3233$) and we omit the results from the table and only report the errors for $r=15$.

\begin{table}
  \topcaption{Reconstruction errors rounded to the nearest integer. Numbers prefixed with {\tiny *} are obtained using sampling.}
  \label{tab:results}
  \centering
  \small
  \begin{tabular}{@{}lrrrr@{}}
    \toprule
                & \Enron & \multicolumn{2}{c}{\TracePort} &
                \FB \\ 
    \cmidrule(lr){2-2} \cmidrule(lr){3-4} \cmidrule(l){5-5} 
    Algorithm & $r=12$ & $r=15$ & $r=1370$ & $r=15$ \\
    \midrule
    \walknmerge & $1\,753$ & $10\,968$ & $7\,613$ & 
    $612\,314$ \\ 
    \ParCube & $2\,089$ & $33\,741$ & $4\cdot 10^{55}$  & $8\cdot 10^{140}$ \\ 
    \ParCuber & $1\,724$ & $11\,189$ & {\tiny *} $2\cdot 10^{7}$  & {\tiny *} $1\,788\,874$  \\ 
    \CPAPR & $1\,619$ & $11\,069$ & $5\,230$  & $626\,349$ \\ 
    \CPAPRr & $1\,833$ & $11\,121$ & {\tiny *} $1\,886$ & {\tiny *} $626\,945$  \\ 
    \bottomrule 
  \end{tabular}
\end{table}

The smallest of the data sets, \Enron, reverses the trend we saw with the synthetic data: now \walknmerge\ is no more the best, as both \CPAPR\ and \ParCuber\ obtain slightly better reconstruction errors. This probably indicates that the data does not have strong Boolean CP type structure. In case of \TracePort\ and $k=15$ however, \walknmerge\ is again the best, if only slightly. With $r=1370$, \walknmerge\ improves, but \CPAPR\ and especially \CPAPRr\ improve even more, obtaining significantly lower reconstruction errors. The very high rank probably lets \CPAPR\ to better utilize the higher expressive power of continuous factorizations, thus explaining the significantly improved results. For \FB, we only report the $r=15$ results as the other methods were not able to handle the rank-$3300$ factorization that gave \walknmerge\ its best results. For this small rank, the situation is akin to \TracePort\ with $r=15$ in that \walknmerge\ is the best followed directly with \CPAPR. \ParCube's errors were off the charts with both \TracePort\ ($r=1370$) and \FB; we suspect that the extreme sparsity (and high rank) fooled its sampling algorithm.

Observing the results of \walknmerge, we noticed that the resulting blocks were typically very small (e.g.~\byby{3}{3}{2}).  This is understandable given the extreme sparsity of the data. For example, the \TracePort\ data does not contain any \byby{2}{2}{2} monochromatic submatrix. On the other hand, the small factors fit to our intuition of the data. Consider, for example, the \FB\ data: a monochromatic block corresponds to a set of people who all write to everybody's walls in the other group of people in certain days. Even when we relax the constrain to dense blocks, it is improbable that these groups would be very big. 

\paragraph{Running time}
Final important question is the running time of the algorithm. The running time of \walknmerge\ depends on one hand on the structure of the input tensor (number, but also location, of non-zeros) and on the other hand, on the parameters used (number of random walks, their length, minimum density threshold, and how big a block has to be to be non-trivial). It is therefore hard to provide any systematic study of the running times. But to give some idea, we report the running times for the \FB\ data, as that is the biggest data set we used. The fastest algorithm for $k=15$ was \ParCube, finishing in a matter of minutes (but note that it gave very bad results). Second-fasters was \walknmerge. We tried different density thresholds $d$, effecting the running time. The fastest was $d=0.2$, when \walknmerge\ took $85$ minutes, the slowest was $d=0.70$, taking $277$ minutes, and the average was $140$ minutes. \CPAPR\ was in between these extremes, taking $128$ minutes for one run. Note, however, that \walknmerge\ didn't return just the $r=15$ decomposition, but in fact all decompositions up to $r=3300$. Neither \ParCube\ or \CPAPR\ was able to handle so large ranks with the \FB\ data. 

\subsubsection{Tucker Decomposition} 
We did some further experiments with the Boolean Tucker decomposition.  For the \Enron\ dataset we obtained a decomposition with a core of size $\byby{9}{11}{9}$ from the MDL step. While this might feel small, the reconstruction error was $1775$, i.e.\ almost as good as the best BCP decomposition. (Recall that MDL does not try to optimize the reconstruction error, but the encoding length.) 


With the Tucker decomposition, we also used a fourth semi-synthetic data
set, \YPSS.\!\footnote{The data set is available at
  \url{http://www.mpi-inf.mpg.de/~pmiettin/btf/}.} This data set contains
noun phrase--context pattern--noun phrase triples that are observed
(surface) forms of subject entity--relation--object entity triples.  With this data our goal is to find a Boolean Tucker decomposition such that the core $\tens{G}$ corresponds to the latent subject--relation--object triples and the factor matrices tell us which surface forms are used for which entity and relation.  A
detailed analysis of the fact-recovering power of the Tucker decomposition
applied to the \YPSS\ dataset can be found
in~\cite{erdos13KM}.
The size of the data  is \byby{39\,500}{8\,000}{21\,000} and it contains $804\,000$ surface term triplets. 

The running time of \walknmerge\ on \YPSS\ was $52$ minutes, and  computing the Tucker decomposition took another $3$ hours. 

An example of a factor of the subjects would be \{\texttt{claude de lorimier},  \texttt{de lorimier},  \texttt{louis}, \texttt{jean-baptiste}\}, corresponding to Claude-Nicolas-Guillaume de Lorimier, a Canadian politician and officer from the 18th Century (and his son, Jean-Baptiste). And example of an object-side factor is  \{\texttt{borough of lachine},  \texttt{villa st. pierre}, \texttt{lachine quebec}\}, corresponding to the borough of Lachine in Quebec, Canada (town of St. Pierre was merged to Lachine in 1999). Finally, an example of a factor in the relations is \{\texttt{was born was },  \texttt{[[det]] born in}\}, with an obvious meaning. In the Boolean core $\tens{G}$ the element corresponding to these three factors is $1$, meaning that according to our algorithm, de Lorimier was born in Lachine, Quebec -- as he was.

\subsubsection{Discussion}
\label{sec:discussion}

Unlike with synthetic data, with real-world data we cannot guarantee that the data has Boolean structure. And if the data does not have the Boolean structure, there does not exist any good BTF. Yet, with most of our experiments, \walknmerge\ performs very well, both in quantitative and qualitative analysis. Considering running times, \walknmerge\ is comparative to \CPAPR\ with most datasets.

\section{Related Work}
\label{sec:related-work}

Normal tensor factorizations are well-studied, dating back to the late Twenties. The two popular decomposition methods, Tucker and CP, were proposed in Sixties~\cite{tucker66some} and Seventies~\cite{carroll70analysis,harshman70foundations}, respectively. The topic has nevertheless attained growing interest in recent years, both in numerical linear algebra and computer science communities. For a comprehensive study of recent work, see~\cite{kolda09tensor}, and the recent work on scalable factorizations~\cite{papalexakis12parcube}.

One field of computer science that has adopted tensor decompositions is computer vision and machine learning. The interest to non-negative tensor factorizations stems from these fields~\cite{shashua05nonnegative,kim07nonnegative}.

The theory of Boolean tensor factorizations was studied in~\cite{miettinen11boolean}, although the first algorithm for Boolean CP factorization was presented in~\cite{leenen99indclas}. A related line of data mining research has also studied a specific type of Boolean CP decomposition, where no $0$s can be presented as $1$s (e.g.~\cite{cerf09closed}). For more on these methods and their relation to Boolean CP factorization, see~\cite{miettinen11boolean}.




\section{Conclusions}
\label{sec:conclusions}

We have presented \walknmerge, an algorithm for computing the Boolean
tensor factorization of large and sparse binary tensors. Analysing the
results of our experiments sheds some light on the strengths and
weaknesses of our algorithm. First, it is obvious that it does what it
was designed to do, that is, finds Boolean tensor factorizations of large and
sparse tensors. But it has its caveats, as well. The random walk
algorithm, for example, introduces an element of randomness, and it
seems that it benefits from larger tensors. The algorithm, and its running time, is also somewhat sensible to the parameters, possibly requiring some amount of tuning.


\bibliographystyle{plain}
\bibliography{walknmerge_bib}

\end{document}